\newcommand{\bs}{\boldsymbol}
\newtheorem{remark}{Remark}
\newtheorem{proposition}{Proposition}
\begin{document}

\title{QoS-Aware Transmission Policies for OFDM Bidirectional Decode-and-Forward Relaying}

\author{Yuan Liu, \IEEEmembership{Member, IEEE}, Jianhua Mo, and Meixia Tao, \IEEEmembership{Senior Member, IEEE}
\thanks{Manuscript received May 19, 2012; revised September 29, 2012 and December 14, 2012; accepted January 22, 2013. The Editor coordinating the review of this paper and approving it for publication was Prof. Luca Sanguinetti.}
\thanks{The authors are with the Department of
Electronic Engineering at Shanghai Jiao Tong University, Shanghai,
200240, P. R. China. Email: \{yuanliu, mjh, mxtao\}@sjtu.edu.cn.}
\thanks{This work is supported by the Program for New Century Excellent Talents in University (NCET) under grant NCET-11-0331 and the National 973 project under grant 2012CB316100. This paper was presented in part at IEEE Global Telecommunications Conference (GLOBECOM), Anaheim, California, USA, December 3-7, 2012 \cite{YuanGC12}.}
}

\maketitle

\vspace{-1.5cm}
\begin{abstract}
Two-way relaying can considerably improve spectral efficiency in
relay-assisted bidirectional communications. However, the benefits
and flexible structure of orthogonal frequency division multiplexing
(OFDM)-based two-way decode-and-forward (DF) relay systems is much
less exploited. Moreover, most of existing works have not considered
quality-of-service (QoS) provisioning for two-way relaying. In this
paper, we consider the OFDM-based bidirectional transmission where a pair of users
exchange information with or without the assistance of a single
DF relay. Each user can communicate with the
other via three transmission modes: direct transmission, one-way
relaying, and two-way relaying. We jointly optimize the transmission
policies, including power allocation, transmission mode selection,
and subcarrier assignment for maximizing the weighted sum rates of the
two users with diverse quality-of-service (QoS) guarantees. We formulate the
joint optimization problem as a mixed integer programming problem.
By using the dual method, we efficiently solve the problem in an asymptotically optimal manner. Moreover, we derive the capacity
region of two-way DF relaying in parallel channels. Simulation results
show that the proposed resource-allocation scheme can substantially
improve system performance compared with the conventional schemes. A number of interesting insights are also provided via comprehensive simulations.

\end{abstract}

\begin{keywords}
  Two-way relaying, decode-and-forward (DF), resource allocation, orthogonal frequency division multiplexing (OFDM).
\end{keywords}

\section{Introduction}

Orthogonal frequency division multiplexing (OFDM) is a leading physical layer transmission technique for high spectral efficiency and date rate in broadband wireless communication systems. OFDM also naturally provides a multiple-access method, as known as OFDMA by allocating different subcarriers to different users in multiuser environments \cite{Seong,Kwak,Zhu1,Zhu2}.
On the other hand, cooperative relay has received much interests due to its capabilities of improving system performance, such as throughput enhancement, power saving, and communication coverage extension \cite{Sendonaris1,Sendonaris2,Laneman}.
Combining relaying architecture with OFDM transmission is a powerful technique for
broadband wireless communication, and thus adopted in many current
and next generation standards, i.e., 3GPP Long Term Evolution
Advanced (LTE-Advanced) and IEEE 802.16m.

However, the traditional one-way relaying is less spectrally
efficient due to the practical half-duplex constraint. To overcome
this problem, two-way relaying has been recently proposed
\cite{Katti,Shengli,Rankov,Popovski,Kim}. Its principle is to apply
network coding at the relay node to mix the signals received from
two links for subsequent forwarding, and then apply
self-interference cancelation at each destination to extract the
desired signals. Naturally, it is promising and attractive to
exploit network coding gain by dynamic resource allocations for
improving spectral efficiency in OFDM bidirectional relay systems.

There are several works on resource allocation in OFDM bidirectional
relay systems
\cite{Jitvanichphaibool,Jang,Dong,Ho,YuanTWC10,YuanTCOM12}.
These works can be divided
into two categories: \emph{per-subcarrier basis}
\cite{Jitvanichphaibool,Jang,Dong} and \emph{subcarrier pairing basis}
\cite{Ho,YuanTWC10,YuanTCOM12}. The first category assumes that the
two-hop cooperative transmission, i.e., source-to-relay link and
relay-to-destination link use the same subcarrier. Such
per-subcarrier basis significantly simplifies the optimization
problems but does not fully utilize the channel dynamics. For
instance, the authors in \cite{Jitvanichphaibool} studied power and
subcarrier allocation for OFDM two-way relaying with both
amplify-and-forward (AF) and decode-and-forward (DF) strategies. By
using dual decomposition method, the problem was decomposed into
per-subcarrier subproblems that can be solved independently. A
two-step suboptimal method for power allocation for OFDM two-way AF
relaying was proposed in \cite{Jang}, where power is first allocated
in each subcarrier for a given per-subcarrier power constraint, then
the per-subcarrier power constraints are coordinated to satisfy a
total peak power constraint of the system. The authors in \cite{Dong} showed that
the optimal power allocation for OFDM two-way AF relaying with a
total peak power constraint turns out to be a two-step approach as
in \cite{Jang}.

Different from the per-subcarrier basis, the subcarrier pairing
basis allows the subcarriers in the first and second hops to be
paired and then a better performance can be provided \cite{Sanguinetti}. In \cite{Ho},
power was first allocated by water-filling and then subcarriers were
paired by a greedy heuristic method for OFDM two-way AF relaying. In
\cite{YuanTWC10}, the authors investigated the subcarrier pairing based joint optimization problem of transmission mode selection, subcarrier assignment, and relay
selection for OFDMA bidirectional relay cellular networks by an ant
colony optimization method from a graph theoretical perspective. In
\cite{YuanTCOM12}, the authors studied optimal subcarrier
and relay assignment for OFDM two-way relay systems  using a bipartite graph matching algorithm.

In view of these existing works, our paper is motivated in
threefold: Firstly, both per-subcarrier basis and
subcarrier pairing basis are \emph{not} optimal for two-way DF
relaying, where the information from one set of subcarriers in the
first hop can be decoded and re-encoded jointly and then transmitted
over a different set of subcarriers in the next hop. This is referred as \emph{subcarrier set basis} in this paper. Secondly,  by use of the parallel OFDM relaying
architecture, the bidirectional communication can be completed by
three transmission modes, namely direct transmission, one-way relaying, and
two-way relaying. Moreover, power allocation, subcarrier assignment,
and transmission mode selection are tightly coupled with each other.
How to jointly coordinate these transmission policies and how much power and spectral efficiencies are contributed by different transmission modes, are crucial
but more importantly, have not been considered for OFDM
bidirectional relay systems. Thirdly, one challenging issue to be addressed for future
developments of wireless networks is how to meet user's diverse
quality-of-service (QoS) requirements.
Real-time applications, such as voice transmission and video streaming, are highly delay-sensitive and need reliable QoS guarantees.
Therefore, it is of great importance to study dynamic resource allocation schemes for supporting diverse QoS requirements.
Nevertheless, what the impacts of resource allocation on QoS guarantees for OFDM
bidirectional relay systems, has also not been addressed in the
literature.

In this paper, we consider the above three issues in a classical OFDM two-way relaying scenario, where a pair of users
exchange information with assistance of a single DF relay
using OFDM. We enable each user to communicate with the other via three transmission modes simultaneously but over different sets of subcarriers. It is worth mentioning that, to our best knowledge, such a hybrid bidirectional transmission was only investigated in our previous work \cite{YuanTWC10}. However, \cite{YuanTWC10} is based on subcarrier pairing basis and does not consider power allocation and QoS guarantees. The main differences between this paper and the related works \cite{Jitvanichphaibool,Jang,Dong,Ho,YuanTWC10,YuanTCOM12} are stated in Table I.

The main contributions and results of this paper are summarized as
follows:
\begin{itemize}
   \item We formulate a joint optimization problem of power allocation, subcarrier assignment, and transmission mode selection for OFDM bidirectional DF relaying. The previous works often consider partial resources of this problem. Our objective is to maximize the weighted sum rates of the two users with diverse QoS guarantees. The joint problem is a mixed integer programming problem and  NP-hard. By using the dual method, we develop an asymptotically optimal algorithm to find the QoS-aware transmission policies with linear complexity of the number of subcarriers. Moreover, we derive the achievable capacity region of two-way DF relaying in parallel relay channels.

    \item Simulation results reveal that for the OFDM two-way DF relaying, the proposed subcarrier set relaying basis can achieve substantial throughput gain over the conventional subcarrier pairing relaying basis. For the OFDM bidirectional with hybrid transmission modes, the importance of one-way relaying is decreasing as signal-to-noise ratio (SNR) increases. On the contrary, the importance of direct transmission and two-way relaying are increasing with SNR, and two-way relaying dominates the system performance. We find that for a given user with more stringent rate QoS requirement, one-way relaying devotes more throughput and direct transmission devotes less. Moreover, for any rate QoS requirement, two-way relaying always dominates the system throughput. We also show that direct transmission dominates the system performance when the relay node is closer either of the two users, and one- and two-way relaying work well when the relay node locates at the midpoint of the two users.
\end{itemize}

The remainder of this paper is organized as follows. Section II describes the system model
and presents the rigorous problem formulation. The proposed
dual-based resource-allocation algorithm is detailed in Section III. Comprehensive simulation results
are illustrated in Section VI. Finally, we conclude this paper in Section V.

\section{Optimization Framework}

\subsection{System Model}
We consider the relay-assisted bidirectional communication as shown in Fig.~\ref{fig:system}, which consists of a pair of users $A$ and $B$, and a single relay $R$. Each user can communicate with the other directly or through the relay. Thus, each user can communicate with the other via three transmission modes, namely, direct transmission, one- and two-way relaying. In this paper, the
two-phase two-way relaying protocol is applied, i.e., the first
phase is multiple-access (MAC) phase and the second phase is the
broadcast (BC) phase \cite{Rankov,Popovski,Kim}.
Each node can transmit and receive at the same time but on different
frequencies. For both one-way relaying and two-way relaying, the
relay adopts DF  strategy and the delay between the first and second
hops can be negligible compared with the duration of a transmission
frame.
For example, Fig.~\ref{fig:system} shows that $A$ and $B$ can use subcarrier $\{9\}$
for the MAC phase but the relay can use subcarriers $\{7, 8\}$ in
the BC phase. Notice that such subcarrier set basis relaying is
also applicable for one-way relaying.

\subsection{Channel Model}

The wireless channels are modeled by large-scale path loss,
shadowing, and small-scale frequency-selective Rayleigh fading. It
is assumed that the transmission to both users is divided into
consecutive frames, and the fading remains unchanged within each
transmission frame but varies from one frame to another. We also
assume that channel estimation is perfectly known at all nodes. Note
that in relay-assisted systems such as IEEE 802.16m, relay nodes are
usually fixed. Such that the task of centralized resource allocation
can be embedded at the relay.  Without loss of generality, the
additive white noises at all nodes are assumed to be independent
circular symmetric complex Gaussian random variables, each having
zero mean and unit variance. The channel coefficients from node $j$
to node $j'$ on subcarrier $n$ are denoted as $h_{j,j',n}$, where
$j,j'\in\{A,B,R\}$, $j\neq j'$.

\subsection{Problem Formulation}
We use superscripts $a$, $b$, and $c$ to denote direct transmission,
one-way relaying, and two-way relaying, respectively.  We first
introduce the following three sets of binary assignment variables with respect to the three transmission modes:
\begin{itemize}
\item[-] $\rho_{k,n}^a$ indicates whether subcarrier $n$ is
assigned to user $k$ for direct transmission, $k\in\{A,B\}$.
\item[-] $\rho_{k,n,i}^b$ indicates whether subcarrier $n$ is
assigned to user $k$ at the $i$-th hop of one-way relaying, $k\in\{A,B\}$, $i=1,2$.
\item[-] $\rho_{n,i}^c$ indicates whether subcarrier $n$ is
assigned to the user pair at the $i$-th hop of two-way relaying, $i=1,2$.
\end{itemize}
As mentioned in previous works \cite{YuanTWC10,YuanTCOM12,YuanJSAC12}, the bidirectional links must occur in pair for two-way relaying. Therefore, in our case, the user index $k$ is not involved in $\rho_{n,i}^c$. In order to avoid interference, these binary variables must satisfy the following constraint:
\begin{equation}\label{eqn:rho}
  \sum_{k\in\{A,B\}}\rho_{k,n}^a + \sum_{k\in\{A,B\}}\sum_{i=1}^2\rho_{k,n,i}^b + \sum_{i=1}^2\rho_{n,i}^c\leq 1,~\forall n\in\mathcal {N},
\end{equation}
where $\mathcal N=\{1,\cdots,N\}$ is the set of subcarriers.

Let $p_{k,n}^a$ denote the transmit power of user $k$ over subcarrier $n$ for direct transmission, $p_{k,R,n}^b$ and $p_{k,R,n}^c$ as the transmit power of user $k$ to the relay over subcarrier $n$ for one- and two-way relaying, respectively, $k\in\{A,B\}$. Let $P_k$ be the total power of user $k$, then the power
allocation policy of user $k$ should satisfy:
\begin{equation}\label{eqn:pk_peak}
  \sum_{n=1}^N\left( p_{k,n}^a + p_{k,R,n}^b + p_{k,R,n}^c \right)\leq
  P_k,~k\in\{A,B\}.
\end{equation}

Denote $p_{R,k,n}^b$ as the transmit power from relay node to user $k$ over subcarrier $n$ using one-way relaying, $k\in\{A,B\}$. Denote $p_{R,n}^c$ as the transmit power of the relay node over subcarrier $n$ for two-way relaying. The relay node is subject to the peak power constraint $P_R$, which can be expressed as
\begin{equation}\label{eqn:pr_peak}
  \sum_{n=1}^N\left( \sum_{k\in\{A,B\}}p_{R,k,n}^b + p_{R,n}^c \right)\leq
  P_R.
\end{equation}

After introducing the assignment and power variables, now we briefly present the achievable rates for the three transmission modes.

For direct transmission mode, the achievable rate of user $k$ over subcarrier $n$ can be easily given by
\begin{equation}
  R_{k,n}^a=C(p_{k,n}^a|h_{k,k',n}|^2),~k,k'\in\{A,B\},k\neq k',
\end{equation}
where $C(x)=\log_2(1+x)$. Then the achievable rate of user $k$ by using direct transmission mode is
\begin{equation}
  R_k^a=\sum_{n=1}^N \rho_{k,n}^a R_{k,n}^a,~k\in\{A,B\}.
\end{equation}

For one-way relaying transmission mode, the achievable rates of the first and second hops for user $k$ can be respectively written as:
\begin{eqnarray}
  &&R_{k,n,1}^b=C(p_{k,R,n}^b|h_{k,R,n}|^2),\nonumber\\
  &&R_{k,n,2}^b=C(p_{R,k',n}^b|h_{R,k',n}|^2),
\end{eqnarray}
with $k,k'\in\{A,B\},k\neq k'$. The end-to-end achievable rate of user $k$ by using one-way relaying is the minimum of the rates achieved in the two hops, which can be expressed as
\begin{equation}
  R_{k}^b = \min\left\{\sum_{n=1}^N\rho_{k,n,1}^b R_{k,n,1}^b, \sum_{n=1}^N\rho_{k,n,2}^b R_{k,n,2}^b\right\},~k\in\{A,B\}.
\end{equation}

For the two-way DF relaying, prior work has studied the capacity region for single-channel case \cite{Rankov,Popovski,Kim}. Based on
these results, we derive the capacity region of OFDM
two-way DF relaying by the following proposition.
\begin{proposition}
  The capacity region $(R_A^c,R_B^c)$ of OFDM two-way DF relaying is given by
  \begin{equation}\label{eqn:capacity}
    \begin{split}
      R_A^c&\leq\sum_{n=1}^N\rho_{n,1}^c R_{A,n,1}^c \\
      R_B^c&\leq\sum_{n=1}^N\rho_{n,1}^c R_{B,n,1}^c \\
      R_A^c+R_B^c&\leq\sum_{n=1}^N\rho_{n,1}^c R_{AB,n,1}^c \\
      R_A^c&\leq\sum_{n=1}^N\rho_{n,2}^c R_{A,n,2}^c \\
      R_B^c&\leq\sum_{n=1}^N\rho_{n,2}^c R_{B,n,2}^c
    \end{split}
  \end{equation}
  where $R_{k,n,1}^c=C(p_{k,R,n}^c|h_{k,R,n}|^2)$, $R_{k,n,2}^c=C(p_{R,n}^c|h_{R,k',n}|^2)$, $k,k'\in\{A,B\},k\neq k'$, and $R_{AB,n,1}^c=C(p_{A,R,n}^c|h_{A,R,n}|^2+p_{B,R,n}^c|h_{B,R,n}|^2)$.

\end{proposition}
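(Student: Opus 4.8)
The plan is to recognize that the two-way DF protocol splits into two resource-disjoint phases — the MAC phase on the subcarriers with $\rho_{n,1}^c=1$ and the BC phase on those with $\rho_{n,2}^c=1$ — and that, under decode-and-forward, a rate pair $(R_A^c,R_B^c)$ is achievable if and only if it lies simultaneously in the MAC-phase region and the BC-phase region. Hence the overall region is the intersection of the two, and the five inequalities in \eqref{eqn:capacity} are exactly this intersection: the first three come from the MAC phase and the last two from the BC phase. I would therefore establish the two phase regions separately and then intersect them.

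For the MAC phase, the relay must jointly decode both users' messages from the orthogonal set $\{n:\rho_{n,1}^c=1\}$. With independent Gaussian codebooks and the fixed per-subcarrier powers $p_{A,R,n}^c$ and $p_{B,R,n}^c$, this is a parallel Gaussian multiple-access channel. Achievability follows from joint-typicality decoding across all active subcarriers, producing the individual and sum constraints with per-subcarrier mutual informations $R_{A,n,1}^c$, $R_{B,n,1}^c$, and $R_{AB,n,1}^c$; the converse is a standard Fano-plus-chain-rule argument that exploits orthogonality of the subcarriers and independence of the inputs, so the sum of per-subcarrier mutual informations is simultaneously optimal. This yields the first three inequalities.

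For the BC phase, the relay has already decoded both messages and re-encodes them for broadcast on $\{n:\rho_{n,2}^c=1\}$. The essential observation is that each receiver already possesses the message intended for the \emph{other} user — it is the message it transmitted in the MAC phase — so this is a bidirectional broadcast channel with receiver side information rather than an ordinary broadcast channel. Applying the single-channel result of \cite{Rankov,Popovski,Kim} per subcarrier (e.g., a network-coding/binning scheme in which each user cancels its self-interference using its known message), I would obtain the two individual constraints $R_A^c\le\sum_n\rho_{n,2}^c R_{A,n,2}^c$ and $R_B^c\le\sum_n\rho_{n,2}^c R_{B,n,2}^c$, bounded respectively by the $R\!\to\!B$ and $R\!\to\!A$ links. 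Crucially there is \emph{no} sum-rate constraint here, which is precisely the network-coding gain of two-way relaying; the converse for each bound is the corresponding point-to-point cut from the relay to that receiver, which side information cannot improve.

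I expect the BC phase to be the main obstacle. The subtlety is not the individual bounds, which are elementary, but establishing that both can be met \emph{simultaneously} with a single broadcast signal under the common relay power $p_{R,n}^c$ — that is, the absence of a sum constraint. This requires invoking the bidirectional-broadcast side-information argument and then checking that it carries over unchanged to the parallel/orthogonal subcarrier setting, where the relay encodes independently on each active subcarrier and the rates add. Once both phase regions are in hand, the final intersection step is immediate, since the two phases occupy disjoint subcarriers and the DF constraint couples them only through the common message rates $(R_A^c,R_B^c)$.
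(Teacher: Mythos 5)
Your proposal is correct and follows essentially the same route as the paper's appendix proof: decompose the protocol into the MAC phase on $\{n:\rho_{n,1}^c=1\}$ and the BC phase on $\{n:\rho_{n,2}^c=1\}$, impose the individual-plus-sum mutual-information bounds at the relay and the individual-only bounds at the users (the latter justified by receiver side information, following the single-channel results of \cite{Rankov,Popovski,Kim} and the parallel-channel argument of \cite{Liang}), and evaluate with independent Gaussian inputs per subcarrier. The only difference is one of detail: you spell out the joint-typicality achievability and Fano-type converse steps that the paper leaves implicit by citation.
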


\begin{proof}
Please see Appendix~\ref{app:capacity}.
\end{proof}

Note that the capacity of OFDM two-way DF relaying derived in
\eqref{eqn:capacity} is  different from
the single-channel or per-subcarrier cases, since
\eqref{eqn:capacity} allows the relay to jointly decode and
re-encode the received signal from one set of subcarriers in the
first hop (MAC phase), and then forward the processed signal over a
different set of subcarriers in the second hop (BC phase).

Note that we focus on the achievable rate region of two-way DF relaying and assume that there exists an optimal coding/encoding approach to achieve the region. We further assume that channel coding is independently done at individual subcarriers, such that the frequency diversity can be exploited by  transmission mode selection in OFDM systems.

We now can characterize the achievable rate of
user $k$ over all the possible transmission modes:
\begin{equation}
  R_k =R_{k}^a+R_{k}^b+R_{k}^c,~k\in\{A,B\}.
\end{equation}

As shown in \cite{Caire}, delay-sensitive or delay-constrained transmission can be regarded as a delay-limited capacity problem, where a constant data rate should be maintained with probability one regardless of channel variations. Thus we consider constant data rates as the QoS requirements in this paper. Each user has its own rate QoS requirement, which can be expressed as
\begin{equation}\label{eqn:qos}
  R_{k}\geq r_{k},~k\in\{A,B\},
\end{equation}
where $r_{k}$ is the minimum rate requirements of user $k$.

Our objective is not only to optimally assign subcarriers
and transmission modes but also to allocate power and rate for each
user so as to maximize the weighted sum rates while maintaining
the individual rate requirements of each user. Mathematically, the joint optimization problem can be formulated as
(\textbf{P1})
\begin{subequations}
\begin{align}
\textbf{P1}:~\max_{\{\bs p, \bs\rho,\bs R\}}&\sum_{k\in\{A,B\}}w_kR_k\\
s.t.~&R_k^b\leq\sum_{n=1}^N\rho_{k,n,1}^b R_{k,n,1}^b \label{eqn:lamb1}\\
&R_k^b\leq\sum_{n=1}^N\rho_{k,n,2}^b R_{k,n,2}^b \label{eqn:lamb2}\\
&R_k^c\leq\sum_{n=1}^N\rho_{n,1}^c R_{k,n,1}^c \label{eqn:lamc1}\\
&R_k^c\leq\sum_{n=1}^N\rho_{n,2}^c R_{k,n,2}^c \label{eqn:lamc2}\\
&R_A^c+R_B^c\leq\sum_{n=1}^N\rho_{n,1}^cR_{AB,n,1}^c \label{eqn:lamc}\\
&\eqref{eqn:rho},\eqref{eqn:pk_peak},\eqref{eqn:pr_peak},\eqref{eqn:qos},\bs\rho\in\{0,1\},
\end{align}
\end{subequations}
where $w_k$ is the weight that represents the priority of user $k$,
$\bs p\triangleq\{p_{k,n}^a, p_{k,R,n}^b, p_{R,k,n}^b, p_{k,R,n}^c, p_{R,n}^c\}$ is the set of power variables, $\bs\rho\triangleq\{\rho_{k,n}^a,\rho_{k,n,i}^b,\rho_{n,i}^c\}$ is the set of assignment variables, and $\bs R\triangleq\{R_k^a,R_k^b , R_k^c\}$ is the set of rate variables.

Comparing with the related works \cite{Jitvanichphaibool,Jang,Dong,Ho,YuanTWC10,YuanTCOM12}, there are several unique features about our problem formulation \textbf{P1}. First, we jointly optimize subcarrier assignment, transmission mode selection, and power allocation. The previous works only consider partial resources of \textbf{P1}. Second, \textbf{P1} represents the first attempt that optimizes OFDM two-way DF relaying based on subcarrier set basis, according to the derived capacity region in \eqref{eqn:capacity}. Third, \textbf{P1} considers individual rate-QoS for each user, and we impose three transmission modes simultaneously to support the individual QoS but over different sets of subcarriers, thanks to the parallel structure of OFDM relaying.

\section{Optimal Transmission Policy}

The problem in \textbf{P1} is a mixed integer optimization problem. Finding the optimal solution needs exhaustive search with exponential complexity, where each subcarrier has eight possibilities of assignments over different users, different transmission modes, along with different hops. Total $N$ subcarriers are used and therefore total $8^N$ possibilities of assignments are needed. Then, power allocation is performed for each assignment (the pure power allocation problem is convex if the assignment is fixed), and the optimal solution follows the assignment that results in maximum throughput.
In this section, we present an efficient method to find the asymptotically optimal solution of \textbf{P1} with linear complexity in the number of subcarriers.

\subsection{Continuous Relaxation}
To make \textbf{P1} more tractable, we relax the binary variables $\bs\rho$ into real-valued ones, i.e., $\bs\rho\in[0,1]$. This continuous relaxation makes $\bs\rho$ as the time sharing factors for subcarriers. In addition, we introduce a set of new variables
$\bs s\triangleq\{p_{k,n}^a\rho_{k,n}^a, p_{k,R,n}^b\rho_{k,n,1}^b, p_{R,k,n}^b\rho_{k,n,2}^b, p_{k,R,n}^c\rho_{n,1}^c, p_{R,n}^c\rho_{n,2}^c\}$. Clearly, $\bs s$ can be viewed as the actual consumed powers on subcarriers.
Substituting $\bs s$ and real-valued $\bs\rho$ into the rate variables $\bs R$, the relaxed problem of \textbf{P1} then can be written as
\begin{subequations}
\begin{align}
\textbf{P2}:~\max_{\{\bs s, \bs\rho,\bs R\}}&\sum_{k\in\{A,B\}}w_kR_k\\
s.t.~&\sum_{n=1}^N\left( s_{k,n}^a + s_{k,R,n}^b + s_{k,R,n}^c \right)\leq
  P_k,~k\in\{A,B\}\label{eqn:s1}\\
  &\sum_{n=1}^N\left( \sum_{k\in\{A,B\}}s_{R,k,n}^b + s_{R,n}^c \right)\leq
  P_R\label{eqn:s2}\\
&\eqref{eqn:rho},\eqref{eqn:qos},\eqref{eqn:lamb1}-\eqref{eqn:lamc},
\bs\rho\in[0,1].
\end{align}
\end{subequations}

It is easy to find that each element of $\bs R$ has the form of $\rho\log_2(1+s/\rho)$ that is jointly concave in $(\rho,s)$, where $\rho$ and $s$ represent the general expressions of the elements in $\bs\rho$ and $\bs s$, respectively. Thus, the objective function of \textbf{P2} is concave since any
positive linear combination of concave functions is concave. Moreover, the constraints \eqref{eqn:s1}, \eqref{eqn:s2} and \eqref{eqn:rho} are affine, and the constraints \eqref{eqn:qos} and \eqref{eqn:lamb1}-\eqref{eqn:lamc} are convex.
Therefore, \textbf{P2} is a convex optimization problem.



We first introduce non-negative
Lagrangian multipliers $\lambda_k^{b_1}$, $\lambda_k^{b_2}$,
$\lambda_k^{c_1}$, $\lambda_k^{c_2}$, $\lambda_{AB}^c$ with
constraints \eqref{eqn:lamb1}-\eqref{eqn:lamc}, respectively. All of
them are denoted as $\bs\lambda\succeq0$. In addition,
non-negative Lagrangian multipliers
$\bs\alpha=\{\alpha_A,\alpha_B,\alpha_R\}\succeq0$ are introduced to
associate with the power constraints of the three nodes,
$\bs\mu=\{\mu_A,\mu_B\}\succeq0$ are associated the two users' QoS
requirements in \eqref{eqn:qos}. Then the dual function of
\textbf{P2} can be defined as
\begin{eqnarray}\label{eqn:dual}
g(\bs\lambda,\bs\alpha,\bs\mu)\triangleq
\max_{ \{\bs{s,\rho,R}\}\in\mathcal{D}}\mathcal L(\bs
s,\bs\rho,\bs R,\bs\lambda,\bs\alpha,\bs\mu),
\end{eqnarray}
where $\mathcal{D}$ is the set of all primal variables $\{\bs s,\bs\rho,\bs R\}$ that satisfy the constraints, and the Lagrangian is
\begin{eqnarray}\label{eqn:La}
\mathcal L(\bs s,\bs\rho,\bs R,\bs\lambda,\bs\alpha,\bs\mu)=\sum_{k\in\{A,B\}}w_kR_k+L^b+L^c+ \sum_{k\in\{A,B\}}\alpha_k\left[ P_k-\sum_{n=1}^N\left( s_{k,n}^a + s_{k,R,n}^b + s_{k,R,n}^c \right)\right]\nonumber\\
  + \alpha_R\left[P_R - \sum_{n=1}^N\left( \sum_{k\in\{A,B\}}s_{R,k,n}^b + s_{R,n}^c \right)\right]+\sum_{k\in\{A,B\}}\mu_k\left[(R_{k}^a+R_k^b+R_k^c)-r_k \right],
\end{eqnarray}
in which
\begin{eqnarray}\label{eqn:lb}
  L^b=\sum_{k\in\{A,B\}}\Bigg[\lambda_k^{b_1}\left(\sum_{n=1}^N \rho_{k,n,1}^bR_{k,n,1}^b-R_k^b\right)
  +\lambda_k^{b_2}\left(\sum_{n=1}^N \rho_{k,n,2}^bR_{k,n,2}^b-R_k^b\right)\Bigg],
\end{eqnarray}
\begin{eqnarray}\label{eqn:lc}
  L^c=\sum_{k\in\{A,B\}}\Bigg[\lambda_k^{c_1}\left(\sum_{n=1}^N \rho_{n,1}^cR_{k,n,1}^c-R_k^c\right)+\lambda_k^{c_2}\left(\sum_{n=1}^N \rho_{n,2}^cR_{k,n,2}^c-R_k^c\right)\Bigg]\nonumber\\
  +\lambda_{AB}^c\left(\sum_{n=1}^N \rho_{n,1}^cR_{AB,n,1}^c-R_A^c-R_B^c\right).
\end{eqnarray}

Computing the dual function $g(\bs\lambda,\bs\alpha,\bs\mu)$
requires to determine the optimal $\{\bs s,\bs\rho,\bs R\}$ for
given dual variables $\{\bs\lambda,\bs\alpha,\bs\mu\}$. In the
following we present the derivations in detail.

\subsection{Optimizing $\{\bs s,\bs\rho,\bs R\}$ for Given $\{\bs\lambda,\bs\alpha,\bs\mu\}$}

\subsubsection{Maximizing Lagrangian over $\bs R$}

Firstly, we look at the rate variables $\bs R$. It is seen that the optimal rates of direct transmission $\{R_k^a\}$ are exactly the capacity expressions, and the rates of the two hops for both one- and two-way relaying need to be coordinated (see \eqref{eqn:lb} and \eqref{eqn:lc}). Therefore we define a new rate set $\bs R'=\{R_k^b , R_k^c\}$ and the part of dual function with respect
to $\bs R'$ is given by
\begin{eqnarray}
  g_0(\bs\lambda,\bs\alpha,\bs\mu)=\max_{\bs R'}\sum_{k\in\{A,B\}}\Big[(w_k+\mu_k-\lambda_k^{b_1}-\lambda_k^{b_2})R_k^b
  +(w_k+\mu_k-\lambda_k^{c_1}-\lambda_k^{c_2}-\lambda_{AB}^c)R_k^c\Big].
\end{eqnarray}

To make sure the dual function is bounded, we have
$w_k+\mu_k-\lambda_k^{b_1}-\lambda_k^{b_2}=0$ and
$w_k+\mu_k-\lambda_k^{c_1}-\lambda_k^{c_2}-\lambda_{AB}^c=0$. In such case, $g_0(\bs\lambda,\bs\alpha,\bs\mu)\equiv0$ and we obtain that
\begin{eqnarray}
  &&\lambda_k^{b_2}=w_k+\mu_k-\lambda_k^{b_1},\label{eqn:lambb2}\\
  &&\lambda_k^{c_2}=w_k+\mu_k-\lambda_k^{c_1}-\lambda_{AB}^c.\label{eqn:lamcc2}
\end{eqnarray}

By substituting these results above into \eqref{eqn:La}, the Lagrangian
can be rewritten as:
\begin{eqnarray}\label{eqn:LaH}
  \mathcal L(\bs s,\bs\rho,\bs\lambda,\bs\alpha,\bs\mu)=  \sum_{n=1}^N\left[\sum_{k\in\{A,B\}}\left(H_{k,n}^a+H_{k,n}^{b_1}+H_{k,n}^{b_2} \right) +H_{n}^{c_1}+H_{n}^{c_2}\right] \nonumber\\
  +\sum_{k\in\{A,B,R\}}\alpha_k P_k - \sum_{k\in\{A,B\}}\mu_k r_k,
\end{eqnarray}
where
\begin{eqnarray}
&&  H_{k,n}^a=(w_k+\mu_k)\rho_{k,n}^aR_{k,n}^a-\alpha_k s_{k,n}^a,~k\in\{A,B\},\label{eqn:Ha}\\
&&  H_{k,n}^{b_1}=\lambda_k^{b_1}\rho_{k,n,1}^bR_{k,n,1}^b-\alpha_k s_{k,R,n}^b,~k\in\{A,B\},\label{eqn:Hb1}\\
&&  H_{k,n}^{b_2}=(w_k+\mu_k-\lambda_k^{b_1})\rho_{k,n,2}^bR_{k,n,2}^b-\alpha_R s_{R,k',n}^b,~k\in\{A,B\},\label{eqn:Hb2}\\
&&  H_{n}^{c_1}=\sum_{k\in\{A,B\}}\lambda_k^{c_1}\rho_{n,1}^cR_{k,n,1}^c+\lambda_{AB}^c \rho_{n,1}^cR_{AB,n,1}^c -\sum_{k\in\{A,B\}}\alpha_k s_{k,R,n}^c,\label{eqn:Hc1}\\
&&  H_{n}^{c_2}=\sum_{k\in\{A,B\}}(w_k+\mu_k-\lambda_k^{c_1}-\lambda_{AB}^c) \rho_{n,2}^cR_{k,n,2}^c-\alpha_R s_{R,n}^c.\label{eqn:Hc2}
\end{eqnarray}
For brevity, we denote $\xi_k=w_k+\mu_k-\lambda_k^{c_1}-\lambda_{AB}^c$ in what follows. As aforementioned, the two users should be both active for two-way relaying, such that the user index $k$ is not involved in $H_{n}^{c_1}$ and $H_{n}^{c_2}$.

Notice that the dual variables $\bs\mu$ and $\bs\alpha$ can be interpreted as QoS weights and power prices, respectively, then \eqref{eqn:Ha}-\eqref{eqn:Hc2} can be regarded as the \emph{profits} of different traffic sessions, which are defined as the QoS-aware throughput of traffic sessions minus the corresponding power costs. In what follows, we show that the profits defined in \eqref{eqn:Ha}-\eqref{eqn:Hc2} play a key role to derive $\{\bs s^*,\bs\rho^*\}$.

\subsubsection{Maximizing Lagrangian over $\bs s$}
Observing the Lagrangian in \eqref{eqn:LaH}, we find that the dual function in \eqref{eqn:dual} can be decomposed into $N$ independent functions with the identical structure:
\begin{equation}\label{eqn:dual-decomp}
  g(\bs\lambda,\bs\alpha,\bs\mu)=\sum_{n=1}^Ng_n(\bs\lambda,\bs\alpha,\bs\mu)+ \sum_{k=A,B,R}\alpha_k P_k - \sum_{k\in\{A,B\}}\mu_k r_k,
\end{equation}
where
\begin{eqnarray}\label{eqn:gn}
  g_n(\bs\lambda,\bs\alpha,\bs\mu)\triangleq\max_{\{\bs{s,\rho}\}\in\mathcal{D}}\mathcal L_n(\bs s,\bs\rho,\bs\lambda,\bs\alpha,\bs\mu)
\end{eqnarray}
with
\begin{eqnarray}\label{eqn:Ln}
  \mathcal L_n(\bs s,\bs\rho,\bs\lambda,\bs\alpha,\bs\mu)=\sum_{k\in\{A,B\}}\left(H_{k,n}^a+H_{k,n}^{b_1}+H_{k,n}^{b_2} \right) +H_{n}^{c_1}+H_{n}^{c_2}.
\end{eqnarray}
Note that the profits $H_{k,n}^a$, $H_{k,n}^{b_1}$, $H_{k,n}^{b_2}$, $H_{n}^{c_1}$, and $H_{n}^{c_2}$ in \eqref{eqn:Ln} are defined in \eqref{eqn:Ha}-\eqref{eqn:Hc2}, respectively.

We now solve $g_n(\bs\lambda,\bs\alpha,\bs\mu)$. Here we first analyze the optimal power allocations $\bs s^*$ for given subcarrier assignment and transmission mode selection $\bs\rho$.

 By applying Karush-Kuhn-Tucker (KKT) conditions \cite{Boyd}, the optimal power allocations for direct transmission are given by
\begin{equation}\label{eqn:pd}
  s_{k,n}^{a,*} = \rho_{k,n}^a\cdot p_{k,n}^{a,*}=\rho_{k,n}^a\cdot\left(\frac{w_k+\mu_k}{\sigma\alpha_k}-\frac{1}{|h_{k,k',n}|^2}\right)^+,
\end{equation}
with $k,k'\in\{A,B\},k\neq k'$, $\sigma\triangleq\ln2$ and $(x)^+\triangleq\max\{x,0\}$. \eqref{eqn:pd} shows that the optimal power
allocations for direct transmission are achieved by multi-level
water-filling. In particular, the water level of each user depends
explicitly on its QoS requirement and weight, and can
differ from one another.

By applying the KKT conditions, we obtain the
optimal power allocations for the first hop of one-way relaying:
\begin{equation}\label{eqn:pb1}
  s_{k,R,n}^{b,*}=\rho_{k,n,1}^b\cdot p_{k,R,n}^{b,*}= \rho_{k,n,1}^b\cdot\left(\frac{\lambda_k^{b_1}}{\sigma\alpha_k}-\frac{1}{|h_{k,R,n}|^2}
  \right)^+,~k\in\{A,B\}.
\end{equation}
Similarly, the optimal power allocations for the second hop of one-way relaying are given by
\begin{equation}\label{eqn:pb2}
  s_{R,k,n}^{b,*}= \rho_{k,n,2}^b\cdot p_{R,k,n}^{b,*}= \rho_{k,n,2}^b\cdot\left(\frac{w_k+\mu_k-\lambda_k^{b_1}}{\sigma\alpha_R}-\frac{1}{|h_{R,k',n}|^2}
  \right)^+
\end{equation}
with $k,k'\in\{A,B\},k\neq k'$. \eqref{eqn:pb1} and \eqref{eqn:pb2} show that the optimal power allocations for DF one-way relaying are also achieved by multi-level water-filling.

For the first hop (or MAC phase) of two-way relaying, the optimal power
allocation $s_{k,R,n}^{c,*}=\rho_{n,1}^c\cdot p_{k,R,n}^{c,*}$, where $p_{k,R,n}^{c,*}$ are the non-negative real root of the following equations:
\begin{eqnarray}\label{eqn:pmac}
\begin{cases}
  \frac{\lambda_A^{c_1}|h_{A,R,n}|^2}{1+p_{A,R,n}^c|h_{A,R,n}|^2}+
  \frac{\lambda_{AB}^c|h_{A,R,n}|^2}{1+p_{A,R,n}^c|h_{A,R,n}|^2+p_{B,R,n}^c|h_{B,R,n}|^2}=\sigma\alpha_A\\
  \frac{\lambda_B^{c_1}|h_{B,R,n}|^2}{1+p_{B,R,n}^c|h_{B,R,n}|^2}+
  \frac{\lambda_{AB}^c|h_{B,R,n}|^2}{1+p_{A,R,n}^c|h_{A,R,n}|^2+p_{B,R,n}^c|h_{B,R,n}|^2}=\sigma\alpha_B.
\end{cases}
\end{eqnarray}
%

%
%
%
It is readily found that \eqref{eqn:Hc2} is concave in $s_{R,n}^{c}$. Taking the partial derivative of \eqref{eqn:Hc2} with respect to $s_{R,n}^{c}$ and letting it be zero, the optimal power allocation for the second hop (or BC phase) of two-way relaying is  $s_{R,n}^{c,*}=\rho_{n,2}^{c}\cdot p_{R,n}^{c,*}$, where
\begin{eqnarray}\label{eqn:pbc}
  p_{R,n}^{c,*}=\begin{cases}0,&\textrm{if}~\alpha_R\geq\frac{\xi_B|h_{R,A,n}|^2+\xi_A|h_{R,B,n}|^2}{\sigma}
  \\\frac{-\phi_2+\sqrt{\phi_2^2-4\phi_1\phi_3}}{2\phi_1},&\textrm{otherwise},
\end{cases}
\end{eqnarray}
with $\phi_1=\alpha_R|h_{R,B,n}|^2|h_{R,A,n}|^2$,
$\phi_2=\alpha_R(|h_{R,B,n}|^2+|h_{R,A,n}|^2)-(\xi_A+\xi_B)|h_{R,B,n}|^2|h_{R,A,n}|^2/\sigma$,
and
$\phi_3=\alpha_R-(\xi_B|h_{R,A,n}|^2+\xi_A|h_{R,B,n}|^2)/\sigma$.

\subsubsection{Maximizing Lagrangian over $\bs\rho$}
Substituting the optimal power allocations $\bs
s^*(\bs\lambda,\bs\alpha,\bs\mu)$ into
\eqref{eqn:dual} to eliminate the power variables, the profits \eqref{eqn:Ha}-\eqref{eqn:Hc2} in the sub-Lagrangian \eqref{eqn:Ln} can be rewritten as respectively
\begin{eqnarray}
&&  H_{k,n}^a=\rho_{k,n}^a\left[(w_k+\mu_k)R_{k,n}^{a,*}-\alpha_k p_{k,n}^{a,*}\right],~k\in\{A,B\},\label{eqn:Ha'}\\
&&  H_{k,n}^{b_1}=\rho_{k,n,1}^b\left[\lambda_k^{b_1}R_{k,n,1}^{b,*}-\alpha_k p_{k,R,n}^{b,*}\right],~k\in\{A,B\},\label{eqn:Hb1'}\\
&&  H_{k,n}^{b_2}=\rho_{k,n,2}^b\left[(w_k+\mu_k-\lambda_k^{b_1})R_{k,n,2}^{b,*}-\alpha_R p_{R,k',n}^{b,*}\right],~k\in\{A,B\},\label{eqn:Hb2'}\\
&&  H_{n}^{c_1}=\rho_{n,1}^c\left[\sum_{k\in\{A,B\}}\lambda_k^{c_1}R_{k,n,1}^{c,*}+\lambda_{AB}^c R_{AB,n,1}^{c,*} -\sum_{k\in\{A,B\}}\alpha_k p_{k,R,n}^{c,*}\right],\label{eqn:Hc1'}\\
&&  H_{n}^{c_2}=\rho_{n,2}^c\left[\sum_{k\in\{A,B\}}(w_k+\mu_k-\lambda_k^{c_1}-\lambda_{AB}^{c,*}) R_{k,n,2}^{c,*}-\alpha_R p_{R,n}^{c,*}\right].\label{eqn:Hc2'}
\end{eqnarray}
Note that in \eqref{eqn:Ha'}-\eqref{eqn:Hc2'}, $\bs p^*$ are obtained from \eqref{eqn:pd}-\eqref{eqn:pbc} derived above, and then the optimal rates can also be computed correspondingly. Thus the profits \eqref{eqn:Ha'}-\eqref{eqn:Hc2'} are only related to the primal variables $\bs\rho$ for given dual variables $\{\bs\lambda,\bs\alpha,\bs\mu\}$.
Then the dual function over each subcarrier $n$ in \eqref{eqn:gn} can be rewritten as
\begin{eqnarray}\label{eqn:gn'}
g_n(\bs\lambda,\bs\alpha,\bs\mu)=\begin{cases}\max_{\bs\rho}& \mathcal L_n(\bs s^*,\bs\rho,\bs\lambda,\bs\alpha,\bs\mu)\\s.t.~&\eqref{eqn:rho},\bs\rho\in[0,1]. \end{cases}
\end{eqnarray}
The sub-Lagrangian $\mathcal L_n(\bs s^*,\bs\rho,\bs\lambda,\bs\alpha,\bs\mu)$ is defined in \eqref{eqn:Ln} with the profits \eqref{eqn:Ha'}-\eqref{eqn:Hc2'}, and also only related to the relaxed variables $\bs\rho$ for given dual variables. Now we are ready to find the optimal $\bs\rho^*$ based on the following proposition.

\begin{proposition}
There always exists an optimal \emph{binary} solution for $\bs\rho^*$ for the dual function \eqref{eqn:dual}.
\end{proposition}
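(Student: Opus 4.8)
The plan is to reduce the claim to a collection of per-subcarrier linear programs and invoke the fact that a linear objective over a polytope is maximized at a vertex. By the decomposition \eqref{eqn:dual-decomp}, maximizing the Lagrangian over $\bs\rho$ splits into $N$ independent problems \eqref{eqn:gn'}, one for each subcarrier $n$, each subject only to the single coupling constraint \eqref{eqn:rho} together with $\bs\rho\in[0,1]$. First I would observe that, for fixed dual variables $\{\bs\lambda,\bs\alpha,\bs\mu\}$, the optimal per-unit powers computed in \eqref{eqn:pd}--\eqref{eqn:pbc} do not depend on $\bs\rho$: each level ($p^{a,*}$, $p^{b,*}$, the root of \eqref{eqn:pmac}, and $p_{R,n}^{c,*}$) is a function of the dual variables and the channel gains alone, and the actual consumed power satisfies $s=\rho\,p^*$. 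Consequently, after substitution the profits \eqref{eqn:Ha'}--\eqref{eqn:Hc2'} are each of the form $\rho\cdot c$ with a coefficient $c$ independent of $\bs\rho$, so the sub-Lagrangian $\mathcal L_n(\bs s^*,\bs\rho,\bs\lambda,\bs\alpha,\bs\mu)$ in \eqref{eqn:Ln} is an \emph{affine} (in fact linear) function of the eight assignment variables associated with subcarrier $n$.

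With linearity in hand, each subproblem \eqref{eqn:gn'} becomes a linear program of the form $\max_{\bs\rho}\sum_m c_m\,\rho_{m,n}$ subject to $\sum_m\rho_{m,n}\le 1$ and $\rho_{m,n}\ge 0$, where $m$ ranges over the (at most eight) candidate sessions on subcarrier $n$ and $c_m$ is the corresponding constant profit coefficient. The box upper bounds $\rho_{m,n}\le 1$ are redundant, since $\sum_m\rho_{m,n}\le 1$ and $\rho_{m,n}\ge 0$ already force $\rho_{m,n}\le 1$. The feasible region is therefore the simplex whose extreme points are the origin $\mathbf 0$ and the standard unit vectors $\mathbf e_m$, all of which are binary. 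Since a linear objective over a bounded polytope attains its maximum at an extreme point, there is an optimal assignment for subcarrier $n$ that is $0$--$1$: writing $m^\star=\arg\max_m c_m$, the optimum is to set $\rho_{m^\star,n}=1$ and all other variables to zero when $c_{m^\star}>0$, and to set every variable to zero when $c_{m^\star}\le 0$. Assembling these per-subcarrier binary optima yields a binary $\bs\rho^*$ attaining the dual value \eqref{eqn:dual}.

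The only case needing care is that of ties, i.e.\ several sessions sharing the maximal coefficient $c_{m^\star}$; the relaxed problem then admits a continuum of fractional optima (time-sharing among the tied sessions), but at least one extreme point among them — a single binary assignment — is still optimal, which is all the proposition requires. I expect the main obstacle to be the first step rather than the vertex argument: one must verify that substituting $\bs s^*$ genuinely linearizes $\mathcal L_n$, that is, that each perspective term $\rho\log_2(1+s/\rho\cdot|h|^2)$ collapses to a quantity proportional to $\rho$ once the $\rho$-independent per-unit power is inserted. This is transparent for the direct and one-way modes, and for the two-way MAC phase it hinges on the observation that the coupled stationarity system \eqref{eqn:pmac} determines $p_{A,R,n}^{c,*}$ and $p_{B,R,n}^{c,*}$ without reference to $\rho_{n,1}^c$, so that $H_n^{c_1}=\rho_{n,1}^c\cdot c$ remains linear as well.
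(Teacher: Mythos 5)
Your proposal is correct and takes essentially the same route as the paper's own proof: after the per-subcarrier decomposition, the paper simply notes that \eqref{eqn:gn'} is a linear program over $\bs\rho_n\in[0,1]$ with a bounded objective and invokes vertex optimality to conclude that a binary $\bs\rho_n^*$ exists. The only difference is that you make explicit the step the paper leaves implicit --- that substituting the $\bs\rho$-independent per-unit powers $\bs s^*=\bs\rho\cdot\bs p^*$ collapses each perspective term $\rho\log_2(1+s|h|^2/\rho)$ to a linear function of $\rho$, including for the coupled MAC-phase system \eqref{eqn:pmac} --- which is a worthwhile verification but not a departure in approach.
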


\begin{proof}
For each subcarrier $n$, $\mathcal L_n(\bs s^*,\bs\rho,\bs\lambda,\bs\alpha,\bs\mu)$ has a bounded
objective and \eqref{eqn:gn'} is a linear programming over $\bs\rho_n\in[0,1]$, where  $\bs\rho_n\triangleq\{\rho_{A,n}^a,\rho_{B,n}^a,\rho_{A,n,1}^b,\rho_{B,n,1}^b,\rho_{A,n,2}^b,\rho_{B,n,2}^b,\rho_{n,1}^c,\rho_{n,2}^c\}$. A globally optimal solution can be found at the vertices
of the feasible region \cite{Garfinkel}. Therefore at least one optimal $\bs\rho_n^*$ is binary.
\end{proof}

According to Proposition 2 that at least one optimal $\bs\rho_n^*$ is binary, we resort to simple exhaustive search over all vertices for each subcarrier $n$, and follow the one that has the maximum value of $\mathcal L_n(\bs s^*,\bs\rho,\bs\lambda,\bs\alpha,\bs\mu)$ in \eqref{eqn:Ln}. Therefore the binary solution of $\bs\rho^*$ can be recovered. In other words, the optimal binary solution of $\bs\rho^*$ can be obtained as follows: Since $\mathcal L_n(\bs s^*,\bs\rho,\bs\lambda,\bs\alpha,\bs\mu)$ has eight profits defined in \eqref{eqn:Ha'}-\eqref{eqn:Hc2'}, each of them corresponds to one element of $\bs\rho_n$ ($\bs\rho_n$ has eight elements). Then, for each subcarrier $n$, by exhaustive search over all eight profits, let one out of the eight elements of $\bs\rho_n$  be 1 if its corresponding profit in $\mathcal L_n(\bs s^*,\bs\rho,\bs\lambda,\bs\alpha,\bs\mu)$ is maximum\footnote{Arbitrary tie-breaking can be performed if
necessary.} and others be 0.

It is also worth noting that the rates in the profits \eqref{eqn:Ha'}-\eqref{eqn:Hc2'} are the functions of channel state information (CSI) that are independent random variables. Thus the profits \eqref{eqn:Ha'}-\eqref{eqn:Hc2'} are also independent random variables. As a result, it is probability 0 that more than one profit have the same maximum value of $\mathcal L_n(\bs s^*,\bs\rho,\bs\lambda,\bs\alpha,\bs\mu)$.



\subsection{Optimizing Dual Variables $\{\bs\lambda,\bs\alpha,\bs\mu\}$}

After computing $g(\bs\lambda,\bs\alpha,\bs\mu)$, we
now solve the standard dual optimization problem which is
\begin{eqnarray}\label{eqn:min}
  \min_{\bs\lambda,\bs\alpha,\bs\mu}&&g(\bs\lambda,\bs\alpha,\bs\mu)\label{eqn:min-obj}\\
  s.t.~~&&-\bs\lambda,-\bs\alpha,-\bs\mu \preccurlyeq 0,\label{eqn:min1}\\
  &&-w_k-\mu_k+\lambda_k^{b_1} \leq 0, k \in \{A,B\}\label{eqn:min2} \\
  &&-w_k-\mu_k+\lambda_k^{c_1}+\lambda_{AB}^c \leq 0, k \in \{A,B\}.\label{eqn:min3}
\end{eqnarray}

Since a dual function is always convex by definition, the commonly used gradient based algorithms or ellipsoid method can be employed to update $\{\bs\lambda,\bs\alpha,\bs\mu\}$ toward optimal $\{\bs\lambda^*,\bs\alpha^*,\bs\mu^*\}$ with global convergence  \cite{Boyd}. In this paper we use ellipsoid method to update $\{\bs\lambda,\bs\alpha,\bs\mu\}$ simultaneously based on the following proposition.
\begin{proposition}
For the dual problem \eqref{eqn:min}, the subgradient vector is
\begin{eqnarray}\label{eqn:delta}
\bs\Delta=\left[\begin{array}{l}
    \Delta\lambda_A^{b_1} = {\sum_{n=1}^N (R_{A,n,1}^b-R_{A,n,2}^b)} \\
    \Delta\lambda_B^{b_1} = {\sum_{n=1}^N (R_{B,n,1}^b-R_{B,n,2}^b)} \\
    \Delta\lambda_A^{c_1} = {\sum_{n=1}^N (R_{A,n,1}^c-R_{A,n,2}^c)} \\
    \Delta\lambda_B^{c_1} = {\sum_{n=1}^N (R_{B,n,1}^c-R_{B,n,2}^c)} \\
    \Delta\lambda_{AB}^c = {\sum_{n=1}^N( R_{AB,n,1}^c - R_{A,n,2}^c -R_{B,n,2}^c)} \\
    \Delta\mu_A = {R_A^a+\sum_{n=1}^N\left(R_{A,n,2}^b+R_{A,n,2}^c\right) - r_A} \\
    \Delta\mu_B = {R_B^a+\sum_{n=1}^N\left(R_{B,n,2}^b+R_{B,n,2}^c\right) - r_B} \\
    \Delta\alpha_A = {P_A-\sum_{n=1}^N \left(s_{A,n}^a+s_{A,R,n}^b+s_{A,R,n}^c\right)} \\
    \Delta\alpha_B = {P_B-\sum_{n=1}^N \left(s_{B,n}^a+s_{B,R,n}^b+s_{B,R,n}^c\right)} \\
    \Delta\alpha_R = {P_R-\sum_{n=1}^N \left(\sum_{k\in\{A,B\}}s_{R,k,n}^b+s_{R,n}^c\right)} \\
\end{array}
\right]
\end{eqnarray}

\end{proposition}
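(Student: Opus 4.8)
The plan is to compute the subgradient of the dual function $g(\bs\lambda,\bs\alpha,\bs\mu)$ directly from its definition as a pointwise maximum, invoking the standard result that for a dual function $g(\bs\nu)=\max_{x\in\mathcal D}\mathcal L(x,\bs\nu)$, a subgradient at $\bs\nu$ is obtained by evaluating the gradient of $\mathcal L$ with respect to $\bs\nu$ at the maximizing primal point $x^*(\bs\nu)$. Concretely, if $\{\bs s^*,\bs\rho^*,\bs R^*\}$ attains the maximum in \eqref{eqn:dual} for the given dual variables, then the vector whose components are the partial derivatives $\partial\mathcal L/\partial(\cdot)$ evaluated at this maximizer is a subgradient of $g$. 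This is precisely the envelope-theorem argument for pointwise maxima of affine (in the dual variables) functions, and it is the standard justification used in dual-decomposition resource allocation.

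First I would note that $\mathcal L$ in \eqref{eqn:La} is \emph{affine} in each dual variable once the primal variables are fixed, so the coefficient multiplying each dual variable is exactly the corresponding subgradient component. Then I would read off each coefficient term by term. For the coupling multipliers, $\lambda_k^{b_1}$ multiplies $\bigl(\sum_n\rho_{k,n,1}^bR_{k,n,1}^b-R_k^b\bigr)$ in \eqref{eqn:lb}; after imposing the boundedness conditions \eqref{eqn:lambb2}, which force the stationarity in $R_k^b$ and let me substitute $R_k^b=\sum_n\rho_{k,n,2}^bR_{k,n,2}^b$ at the optimum, this coefficient collapses to $\sum_n(R_{k,n,1}^b-R_{k,n,2}^b)$, giving $\Delta\lambda_k^{b_1}$. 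The analogous substitution using \eqref{eqn:lamcc2} handles $\Delta\lambda_A^{c_1}$, $\Delta\lambda_B^{c_1}$, and $\Delta\lambda_{AB}^c$ from \eqref{eqn:lc}. For the QoS multipliers $\mu_k$, the coefficient of $\mu_k$ in \eqref{eqn:La} is $(R_k^a+R_k^b+R_k^c)-r_k$; substituting the optimized relay rates $R_k^b=\sum_n\rho_{k,n,2}^bR_{k,n,2}^b$ and $R_k^c=\sum_n\rho_{n,2}^cR_{k,n,2}^c$ yields $\Delta\mu_k$. Finally the power-price coefficients of $\alpha_A,\alpha_B,\alpha_R$ are read directly from the bracketed power-slack expressions in \eqref{eqn:La}, giving the last three components as the power surpluses $P_k-\sum_n(\cdots)$.

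The one subtlety I would be careful about — and which I expect to be the \emph{main obstacle} to a fully rigorous statement — is that $g$ need not be differentiable, since the maximizing primal point can be non-unique (the inner problem over $\bs\rho$ is a linear program, as shown in Proposition~2). Hence I must claim only a \emph{subgradient}, not a gradient. The justification is that for any other dual point $\bs\nu'$, weak duality at the fixed maximizer gives $g(\bs\nu')\ge\mathcal L(x^*(\bs\nu),\bs\nu')$, and since $\mathcal L$ is affine in the dual variables this lower bound is exactly the first-order expansion $g(\bs\nu)+\bs\Delta^{\mathsf T}(\bs\nu'-\bs\nu)$; the defining inequality of a subgradient for the convex function $g$ then follows immediately. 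I would state this envelope argument once and apply it uniformly to all ten components. The remaining work is purely bookkeeping: verifying that the boundedness substitutions \eqref{eqn:lambb2}--\eqref{eqn:lamcc2} are consistently applied so that each relay session's rate is represented by its second-hop (BC-phase) expression, which is what produces the particular form of $\Delta\mu_A$, $\Delta\mu_B$, and $\Delta\lambda_{AB}^c$ listed in \eqref{eqn:delta}.
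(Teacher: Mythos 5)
Your proposal is correct and is precisely the argument this result rests on: the paper states Proposition~3 without an explicit proof, implicitly invoking the same standard fact that the constraint slacks of the Lagrangian, evaluated at a maximizing primal point, form a subgradient of the convex dual function (your one-line weak-duality/affinity justification, $g(\bs\nu')\geq\mathcal L(x^*(\bs\nu),\bs\nu')=g(\bs\nu)+\bs\Delta^{\mathsf T}(\bs\nu'-\bs\nu)$, is exactly the right one). Your handling of the eliminated multipliers via \eqref{eqn:lambb2}--\eqref{eqn:lamcc2} is also sound, since the rate variables $R_k^b,R_k^c$ cancel between the paired slack terms regardless of which maximizer is chosen, which is what makes the second-hop expressions appear in $\Delta\lambda_k^{b_1}$, $\Delta\lambda_k^{c_1}$, $\Delta\lambda_{AB}^c$, and $\Delta\mu_k$ and reproduces all ten components of \eqref{eqn:delta}.
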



\subsection{Discussions on Optimality and Complexity}

It is worth noting that, given any $\{\bs\lambda,\bs\alpha,\bs\mu\}$, there may exist non-integer optimal solutions for maximizing $\mathcal L_n(\bs\rho,\bs\lambda,\bs\alpha,\bs\mu)$ in \eqref{eqn:Ln}. In this case, more than one profit have the maximum value among the eight profits in $\mathcal L_n(\bs\rho,\bs\lambda,\bs\alpha,\bs\mu)$. As stated in Proposition 2, we choose only one of the optimal solutions in binary form to satisfy the primal exclusive subcarrier assignment constraints.

We also note that, for the subcarriers whose $\mathcal L_n(\bs\rho,\bs\lambda,\bs\alpha,\bs\mu)$ has multiple maximum profits, the binary subcarrier assignments may not be feasible for the primal power constraint(s). The key point is that the Lagrangian may not be differentiable at some given $\{\bs\lambda,\bs\alpha,\bs\mu\}$ (but the subgradients exist). Thus, the small variation of the dual variables $\{\bs\lambda,\bs\alpha,\bs\mu\}$ may change the binary assignment variables $\bs\rho$, and then result in a quantum leap on the sum power(s). In this case, though the dual variables $\{\bs\lambda,\bs\alpha,\bs\mu\}$ converge to an optimum, the allocated powers may exceed the primal power constraint(s). In other words, the ``duality gap" exists.
However, as shown in \cite{Yu,Ng}, the duality gap becomes zero under the so-called ``time-sharing" condition, and the time-sharing condition is always satisfied as the number of subcarriers increases in multicarrier systems. Then the global optimum can be obtained accurately in dual domain.
Briefly, as the argument in \cite{Yu} and \cite{Ng}, if two sets of rates using two different transmission policies are achievable individually, then their linear combination is also achievable by a frequency-division multiplex of the two transmission policies. This is possible when the number of subcarriers goes to large, the channel gains of adjacent subcarriers become more and more similar to each other. As a result, the same performance as that of time-sharing can be achieved by frequency-sharing without implementing the actual time-sharing.
%

Finally, we summarize the proposed dual-based solution in Algorithm 1.
Note that the dual problem in \eqref{eqn:min-obj}-\eqref{eqn:min3} is a standard inequality constrained problem. For such a problem, the ellipsoid update depends on whether the inequality constraints \eqref{eqn:min-obj}-\eqref{eqn:min3} are met. That is, if the dual variables are feasible (i.e., the inequality constraints are met), the subgradients are chosen as the unconstrained case (i.e., $\bs\Delta$), and otherwise the subgradients are chosen as the subgradients of the constraints. The detailed update rule can be found in \cite{Ellipsoid}.
%
%
In Algorithm 1, for given transmit powers, the
system is said to be in an outage if any QoS rate
requirement can not be satisfied. In this case, we set the rates as zero.
The computational complexity of the
ellipsoid method is $\mathcal O(q^2)$, where $q$ is the number of
the dual variables and $q=10$ in our case. Combining the complexity of
decomposition in \eqref{eqn:dual-decomp}, the total complexity of
the proposed algorithm is $\mathcal O(q^2N)$, which is linear in the
number of subcarriers.

\begin{algorithm}[tb]
\caption{Proposed dual-based method for \textbf{P1}}
\begin{algorithmic}[1]
\STATE \textbf{initialize} $\{\bs\lambda,\bs\alpha,\bs\mu\}$.
\REPEAT \STATE Compute the profits
$\{H_{k,n}^a,H_{k,n}^{b_1},H_{k,n}^{b_2},H_{n}^{c_1},H_{n}^{c_2}\}$
using the optimal power allocations $\bs
s^*(\bs\lambda,\bs\alpha,\bs\mu)$ derived in
\eqref{eqn:pd}-\eqref{eqn:pbc} for all $k$ and $n$.  \STATE Compare the profits
for each subcarrier $n$, and let the maximum profit be active and others
inactive.  Then the optimal
$\bs\rho^*(\bs\lambda,\bs\alpha,\bs\mu)$ can be obtained. \STATE
Update $\{\bs\lambda,\bs\alpha,\bs\mu\}$ using the ellipsoid method as the following steps 6-10:
 \IF{the constraints \eqref{eqn:min1}-\eqref{eqn:min3} are all satisfied} \STATE Update the ellipsoid with $\bs\Delta$ defined in \eqref{eqn:delta}. \ELSE \STATE
 Update the ellipsoid with the gradient of the constraints \eqref{eqn:min1}-\eqref{eqn:min3}.\ENDIF
 \UNTIL{$\{\bs\lambda,\bs\alpha,\bs\mu\}$ converge.}\IF{$R_A\geq r_A$ and $R_B\geq r_B$}
 \STATE QoS requirements are satisfied and output  $R_A$ and $R_B$. \ELSE \STATE Declare an outage and output $R_A=R_B=0$.\ENDIF
\end{algorithmic}

\end{algorithm}

\begin{remark}
Note that we consider the classical three-node bidirectional transmission model only for obtaining more insights and ease of presentation. The proposed optimization framework and algorithm can be extended to general multi-pair multi-relay scenario. Briefly, if there are $K$ user pairs and $M$ relays, by solving the dual problem of the original problem (the details are omitted here), the optimal power allocations have the same structures as \eqref{eqn:pd}-\eqref{eqn:pbc}. Then, for each subcarrier $n$ we obtain: for direct transmission, $2K$ profits all having the same structure as \eqref{eqn:Ha} and each for one user; similarly, for the first (or second) hop of one-way relaying, $2KM$ profits all having the same structure as \eqref{eqn:Hb1} (or \eqref{eqn:Hb2}) and each for one user-relay pair; for the first (or second) hop of two-way relaying, $KM$ profits all having the same structure as \eqref{eqn:Hc1} (or \eqref{eqn:Hc2}) and each for one relay and one user pair. According to the idea in Proposition 2, we assign each subcarrier $n$ to the traffic that has the maximum profit among the total $2K+6KM$ profits. Finally, the gradient or ellipsoid method can be used to find the optimal dual variables with polynomial complexity.

\end{remark}

\section{Simulation Results}

In this section, we conduct comprehensive simulation to evaluate the
performance of the proposed scheme.  The performance of two
benchmarks, namely BM1 and BM2, are presented. In BM1, the two users transmit directly without the assistance of the relay. Compared with the proposed scheme, BM2 has no two-way relaying transmission.
Note that these two benchmarks are the special cases of the proposed scheme and can be solved by the proposed algorithm with complexity $\mathcal O(q_1^2N)$ and $\mathcal O(q_2^2N)$, where $q_1=4$ and $q_2=7$ for BM1 and BM2, respectively. Note again that the complexity of the proposed algorithm is $\mathcal O(q^2N)$ with $q=10$, which is slightly higher than BM1 and BM2 but has the same order of complexity.
For brevity, we use DT, OW, and TW to denote direct transmission, one-
and two-way relaying in the simulation figures.

We set the distance between users $A$ and $B$ as $2$km, and the relay
$R$ is located in a line between the two users. The
Stanford University Interim (SUI)-6 channel model \cite{SUI} is employed to
generate OFDM channels and the path-loss exponent is fixed as $3.5$.
The number of subcarriers is set as $N=256$. Without loss of
generality, we let the three nodes have the same peak power
constraints (i.e., $P_A=P_B=P_R$) in dB. In all simulations, the stopping condition of the ellipsoid method (the details can be found in \cite{Boyd,Ellipsoid}) is set to be $10^{-4}$, which is accurate enough to the global optimum.

\subsection{Symmetric QoS Requirements and Relay Location}
In this subsection, we let the relay locate at the midpoint of the two users and $w_A=w_B=1$, the two users have the same rate requirements.

To clearly show the benefits of the proposed subcarrier set relaying, we plot the performance comparison between the capacity region derived \eqref{eqn:capacity} and conventional subcarrier pairing relaying in Fig. \ref{fig:gain}. For an illustration purpose, we assume equal power allocation, $N=8$ subcarriers, $w_A=w_B=1$, and $r_A=r_B=0$. For both two relaying methods, we adopt exhaustive search to find the optimal solutions. From the figure, we observe that the proposed two-way DF capacity region derived in \eqref{eqn:capacity} remarkably outperforms the conventional subcarrier pairing relaying, e.g., about $35\%$ throughput gain can be achieved when SNR=$20$dB.

We compare the system throughput performance of the proposed scheme and
the two benchmarks in Fig.~\ref{fig:rate}, where $r_A=r_B=5$bits/OFDM symbol.  It is observed that the
proposed algorithm significantly outperforms the benchmarks, which
clearly demonstrate the superiority of the proposed algorithm. For example, when signal-to-noise ratio (SNR) is $20$dB, the proposed scheme can achieve about $60\%$ and $10\%$ throughput improvements compared with BM1 and BM2. Moreover, the throughput improvements are increasing with SNR.

We then plot the outage performance of the three schemes in Fig. \ref{fig:outage}, where the QoS rate requirements $r_A=r_B=50$bits/OFDM symbol and $r_A=r_B=100$bits/OFDM symbol are considered. The system is said to be in an outage if any QoS rate requirement of the two users can not be satisfied. Compared with the two benchmarks, we observe that the proposed scheme can more efficiently support the QoS rate requirements.

Fig.~\ref{fig:subcnum} illustrates the number of occupied subcarriers
by different transmission modes, where $r_A=r_B=5$bits/OFDM symbol.  One can
observe that in low SNR regime (e.g.,
$10$dB), the three schemes do not occupy all subcarriers. This is
because in low SNR regime, no power is allocated to those
subcarriers with poor channel conditions. When SNR is high (e.g.,
$30$dB), we
observe that BM2 (DT together with OW) and the proposed scheme
occupy almost all subcarriers. However, some subcarriers are still discarded in
BM1 even when SNR is $30$dB. These observations show the benefits of
cooperative transmission. Finally, we find that the utilized
subcarriers for direct transmission are increasing with SNR in three
schemes, and the utilized subcarriers for two-way relaying are
increasing with SNR in our proposed scheme. Nevertheless, in both
BM2 and proposed schemes, the utilized subcarriers for one-way
relaying are increasing when SNR is less than $20$dB, and decreasing
when SNR is larger than about $20$dB.

Fig.~\ref{fig:ratepercen} shows the throughput percentages by different transmission modes
for BM2 and the proposed scheme, where $r_A=r_B=5$bits/OFDM symbol.  One observes that the throughput percentages of direct transmission and two-way relaying are increasing with SNR, but the importance of one-way relaying is decreasing as SNR increases. Moreover, in our proposed scheme, two-way relaying dominates the throughput performance. This suggests the significance of two-way relaying in the system.

\subsection{Asymmetric QoS Requirements and Relay Location}

In this subsection, we evaluate the performance of the proposed scheme when the two users' rate QoS requirements are asymmetric. The effects of relay location is also investigated. In this subsection we fix the transmit peak powers $P_A=P_B=P_R=20$dB.

Here we first let the relay node locate at the midpoint of user $A$ and user $B$. Fig.~\ref{fig:QoS_Ra} and Fig.~\ref{fig:QoS_Rb} show the throughput by different transmission modes of user $A$ and user $B$ versus different QoS requirements respectively, where $r_A+r_B=100$ bits/OFDM symbol. Fig.~\ref{fig:QoS_biased_20db} shows that two-way relaying contributes the highest throughput for both two users whatever the QoS requirements $r_A$ and $r_B$ vary. This is because two-way relaying must occur in pair, i.e., when two-way relaying generates throughput for user $A$, it also generates throughput for user $B$. We also find that when a user's rate requirement becomes more stringent, the effect of one-way relaying becomes more important for this user, and the effect of direct transmission becomes small.

We further consider the impacts of relay location in Fig.~\ref{fig:relay_position_20db}, where the relay node moves from user $A$ to user $B$ in a line. In this figure, the two users' rate requirements are fixed as $r_A=r_B=5$bits/OFDM symbol. It observes that direct transmission dominates the system performance when the relay node is close to either of the two users. This is because the fading channels between the relay and the distant user becomes the major limit of cooperative transmission (including one- and two-way relaying), which makes the cooperative transmission hardly happen. Moreover, both one- and two-way relaying perform their best when the relay node is at the midpoint of the two users.

\section{Conclusion}

In this paper, we studied the joint optimization problem of power
allocation, subcarrier assignment,  and transmission mode
selection with QoS guarantees in OFDM-based bidirectional
transmission systems. By using the dual method, we efficiently
solved the mix integer programming problem in an asymptotically
optimal manner. We also derived the capacity region of two-way DF
relaying in OFDM channels. Simulation results showed that our
proposed scheme can outperform the traditional schemes by a
significant margin.

A few interesting conclusions have been obtained through simulations.  First, the significance of one-way relaying is
decreasing with SNR. Second, the throughput percentages of direct
transmission and two-way relaying are increasing with SNR, and
two-way relaying dominates the system performance. Third, for a given user, one-way relaying contributes more throughput with the increasing rate requirements, and direct transmission performs oppositely. Finally, direct transmission dominates the system performance when the relay is closer either of the two users, and one- and two-way relay work well when the relay locates in the midpoint of the two users.

\appendices
\section{Proof of Proposition 1}\label{app:capacity}
The proposition can be proved by the similar way as \cite{Liang}, where the achievable capacity for traditional one-way relaying in parallel relay channels was derived.
Specifically, we first denote $\mathcal N_1 = \{n|\rho_{n,1}^c=1\}$ and $\mathcal N_2 = \{n|\rho_{n,2}^c=1\}$.
In the first hop, the received signals at the relay $R$ is given by
\begin{eqnarray}
  {Y_{R,n}}=\sqrt {p_{A,R,n}^c} {h_{A,R,n}}{X_{A,n}} + \sqrt {p_{B,R,n}^c} {h_{B,R,n}}{X_{B,n}}+ {Z_{R,n}} ,~ n \in \mathcal N_1.
\end{eqnarray}
In the second hop, the received signals at users $A$ and $B$  are
\begin{eqnarray}
  {Y_{A,n}}=\sqrt {p_{R,n}^c} {h_{R,A,n}}{X_{R,n}} + {Z_{A,n}} , n \in \mathcal N_2 ,\\
  {Y_{B,n}}=\sqrt {p_{R,n}^c} {h_{R,B,n}}{X_{R,n}} + {Z_{B,n}} , n \in  \mathcal N_2 ,
\end{eqnarray}
respectively.

Denote ${X_k} = \{ X_{k,n} | n \in\mathcal N_1 \},~k \in \{A,B \}$ ,
${Y_R} = \{ Y_{k,n} | n \in\mathcal N_1\}$ ,
${X_R} = \{ Y_{k,n} | n \in\mathcal N_2 \}$ and
${Y_k} = \{ Y_{k,n} | n \in\mathcal N_2\},~k \in \{ A, B \}$.
The achievable capacity region of this channel is the convex hull of all $\left(R_A^c, R_B^c\right)$ satisfying
\begin{eqnarray}
  R_A^c &\leqslant& \min \{ I\left( {{X_A};{Y_R}|{X_B}} \right),I\left( {{X_R};{Y_B}} \right)\},   \\
  R_B^c &\leqslant& \min \{ I\left( {{X_B};{Y_R}|{X_A}} \right),I\left( {{X_R};{Y_A}} \right)\},   \\
  R_A^c + R_B^c &\leqslant& I\left( {{X_A},{X_B};{Y_R}} \right).
\end{eqnarray}

Let the input signals $X_{A,n}$, $X_{B,n}$, and $X_{R,n}$ for each subcarrier be independent Gaussian
distributed, we obtain the achievable capacity region \eqref{eqn:capacity}. This completes the proof.

\bibliographystyle{IEEEtran}
\bibliography{IEEEabrv,Paper-TW-May-12-0709-df}

\begin{table}[b]
 \centering\small
 \begin{threeparttable}
 \caption{\label{tab:compare}Related Works Compared with This Paper}
  \begin{tabular}{cccccc}
  \toprule
  & basis & power allocation  & mode selection & subcarrier assignment & QoS
  \\
  \midrule
  Our paper & subcarrier-set & $\surd$ & $\surd$ & $\surd$ & $\surd$  \\\hline
  \cite{Jitvanichphaibool} & per-subcarrier & $\surd$ & $\times$ & $\surd$ & $\times$ \\\hline
  \cite{Jang} & per-subcarrier & $\surd$ & $\times$ & $\times$ & $\times$ \\\hline
  \cite{Dong} & per-subcarrier & $\surd$ & $\times$ & $\times$ & $\times$ \\\hline
  \cite{Ho} & subcarrier-pairing & $\surd$ & $\times$ & $\times$ & $\times$ \\\hline
  \cite{YuanTWC10} & subcarrier-pairing & $\times$ & $\surd$ & $\surd$ & $\times$ \\\hline
  \cite{YuanTCOM12} & subcarrier-pairing & $\times$ & $\times$ & $\surd$ & $\times$ \\
  \bottomrule
  \end{tabular}
  \end{threeparttable}
\end{table}

\begin{figure}[b]
\begin{centering}
\includegraphics[scale=1]{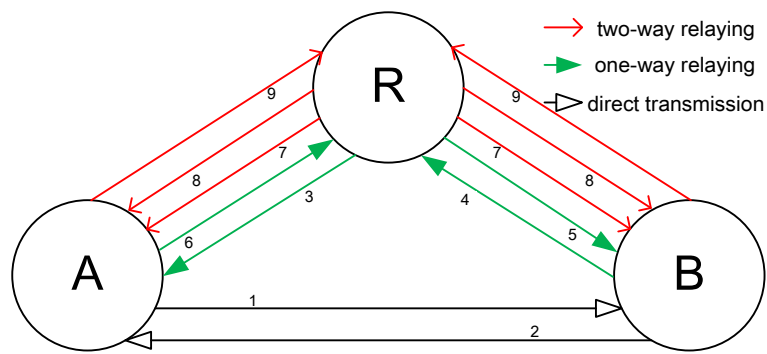}
\vspace{-0.1cm}
 \caption{Relay-assisted bidirectional transmission model, where the numbers denote the subcarrier indexes.}\label{fig:system}
\end{centering}
\vspace{-0.3cm}
\end{figure}
\begin{figure}[b]
\begin{centering}
\includegraphics[scale=.8]{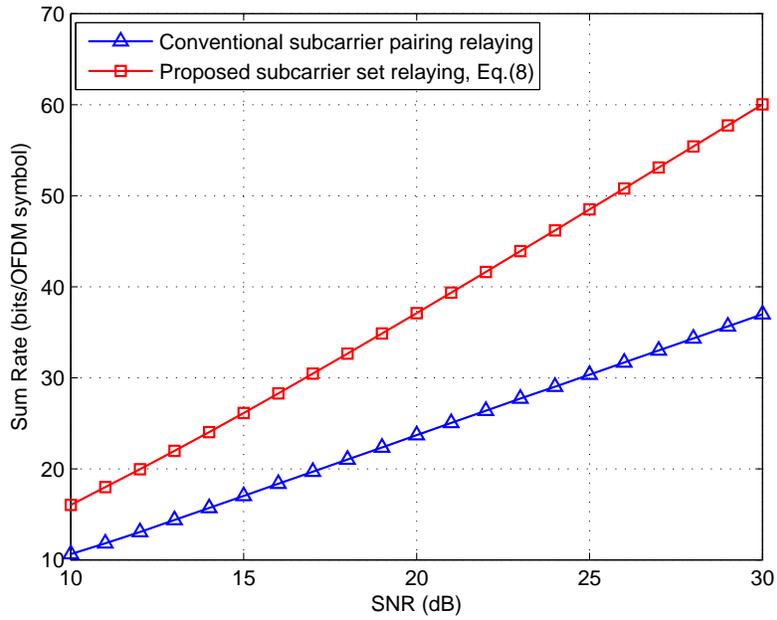}
\vspace{-0.1cm}
 \caption{Sum-rate comparison of the proposed subcarrier set relaying and the conventional subcarrier pairing relaying with $N=8$ subcarriers and equal power allocation, where $w_A=w_B=1$.}\label{fig:gain}
\end{centering}
\vspace{-0.3cm}
\end{figure}
\begin{figure}[b]
\begin{centering}
\includegraphics[scale=.8]{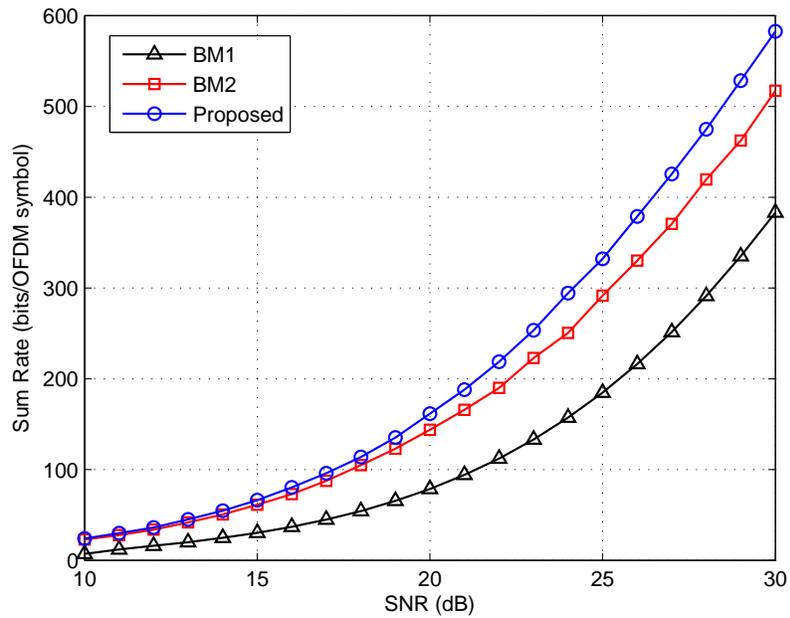}
\vspace{-0.1cm}
 \caption{Sum-rate performance of different schemes with $w_A=w_B=1$ and $r_A=r_B=5$bits/OFDM symbol.}\label{fig:rate}
\end{centering}
\vspace{-0.3cm}
\end{figure}
\begin{figure}[b]
\begin{centering}
\includegraphics[scale=.8]{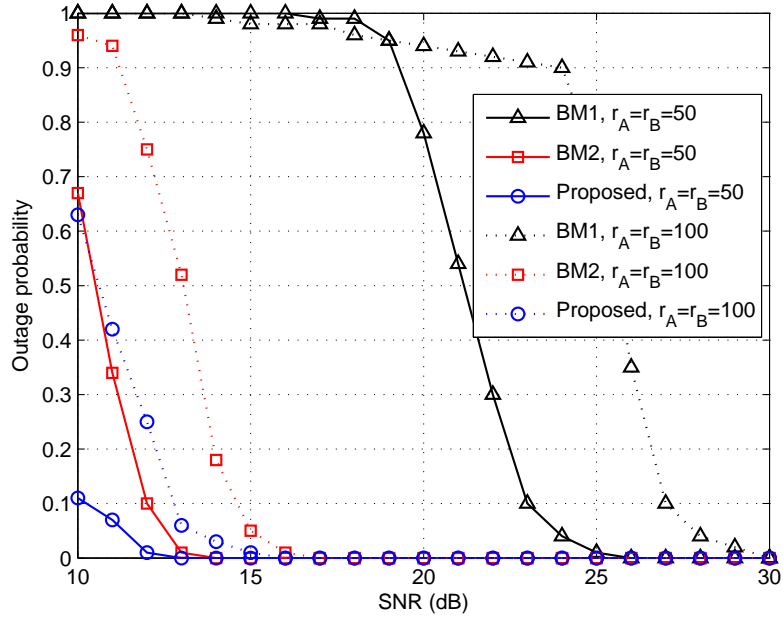}
\vspace{-0.1cm}
 \caption{Outage performance of different schemes with $r_A=r_B=50$bits/OFDM symbol and $r_A=r_B=100$bits/OFDM symbol, where $w_A=w_B=1$.}\label{fig:outage}
\end{centering}
\vspace{-0.3cm}
\end{figure}
\begin{figure}[b]
\begin{centering}
\includegraphics[scale=.8]{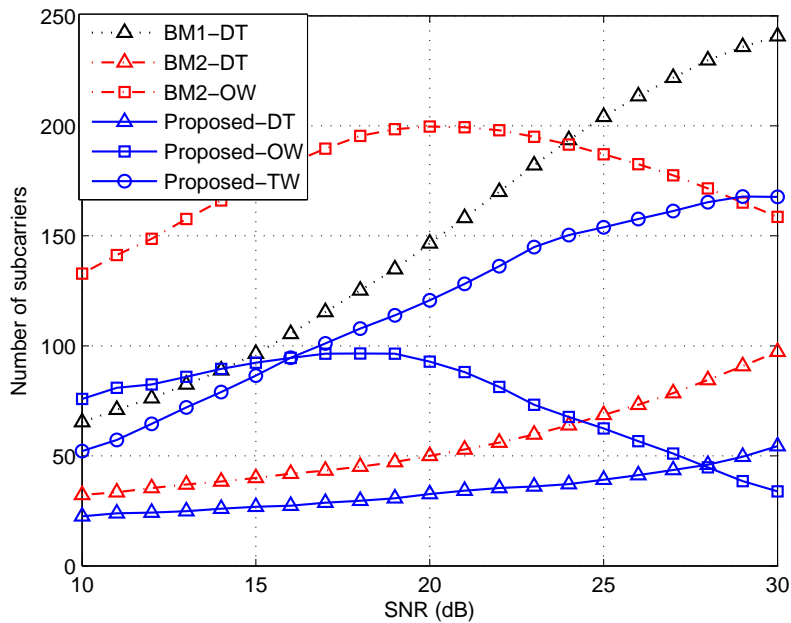}
\vspace{-0.1cm}
 \caption{Number of occupied subcarriers of difference transmission modes for different schemes with $w_A=w_B=1$ and $r_A=r_B=5$bits/OFDM symbol.}\label{fig:subcnum}
\end{centering}
\vspace{-0.3cm}
\end{figure}
%


%
\begin{figure}[t]
\begin{centering}
\makeatletter\def\@captype{figure}\makeatother
\subfigure[BM2]{\includegraphics[width=5in]{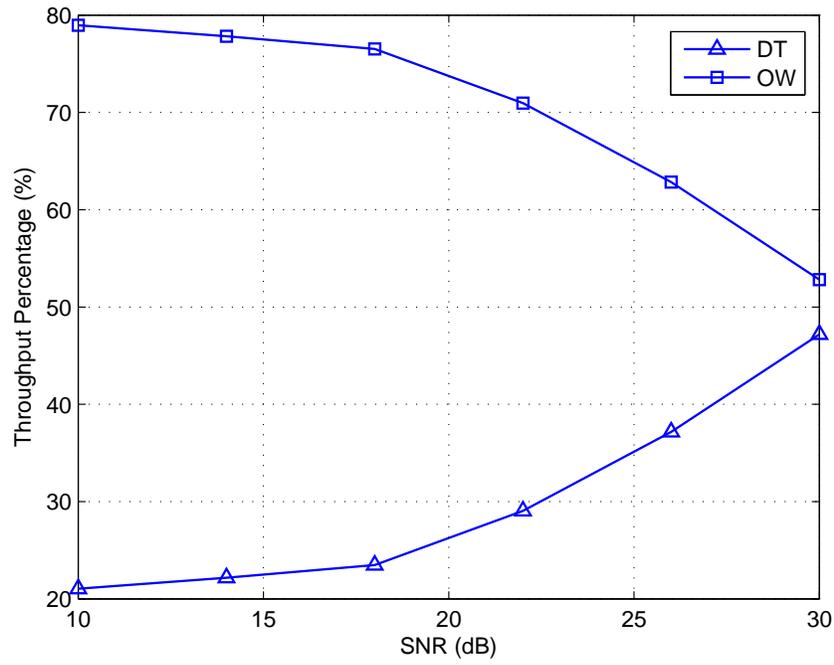}
\label{fig:percentageBM2}}
\subfigure[Proposed]{\includegraphics[width=5in]{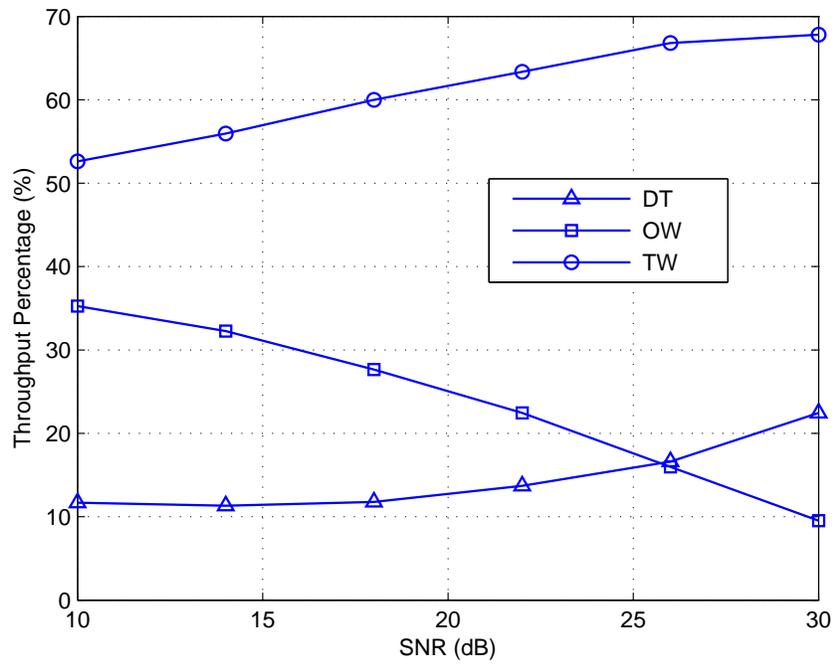}
\label{fig:percentageProposed}} \vspace{-0.1cm} \caption{Throughput percentages by different transmission modes with $w_A=w_B=1$ and $r_A=r_B=5$bits/OFDM symbol.} \label{fig:ratepercen}
\end{centering}
\vspace{-0.3cm}
\end{figure}
\begin{figure}[t]
\begin{centering}
\makeatletter\def\@captype{figure}\makeatother
\subfigure[$R_A$]{\includegraphics[width=5in]{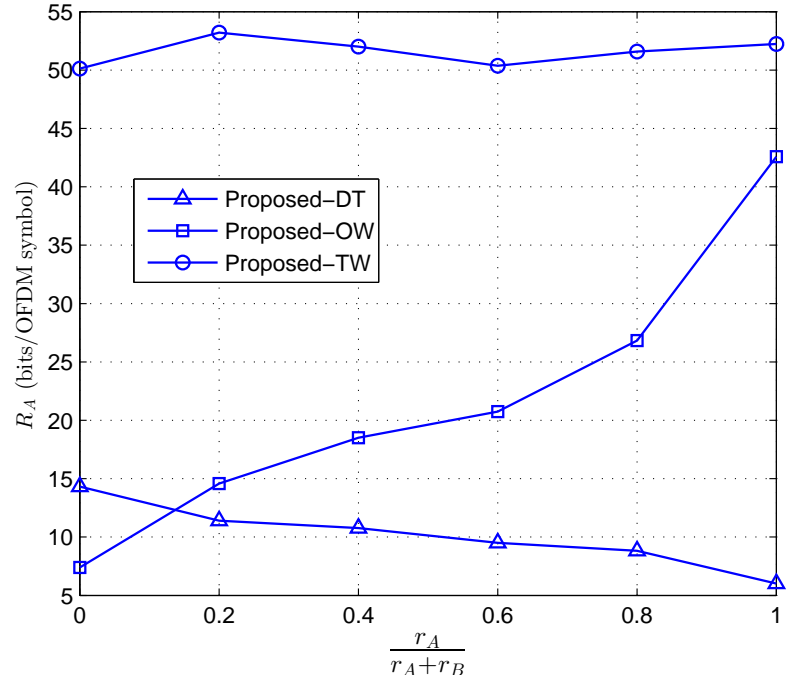}
\label{fig:QoS_Ra}}
\subfigure[$R_B$]{\includegraphics[width=5in]{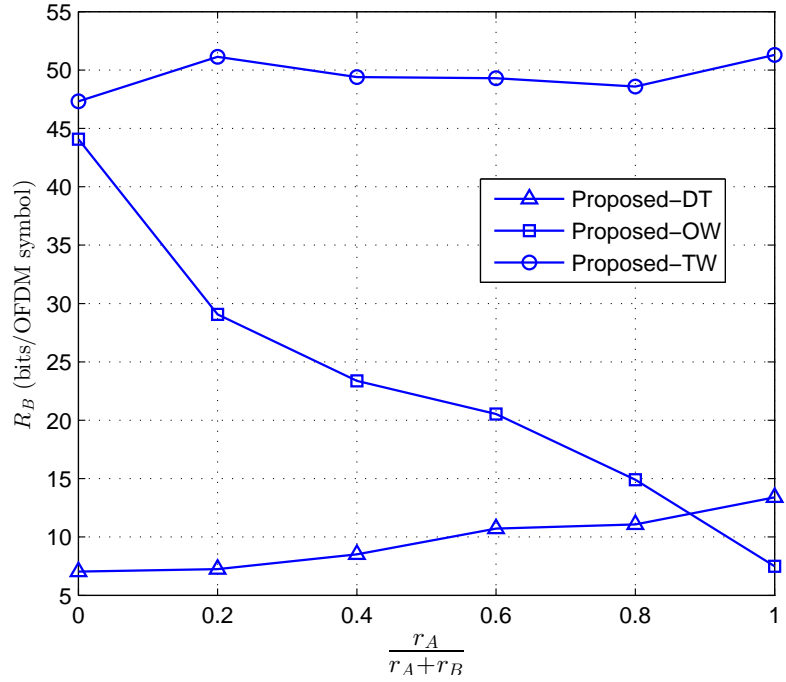}
\label{fig:QoS_Rb}} \vspace{-0.1cm} \caption{$R_A$ and $R_B$ versus
different QoS ratio $\frac{r_A}{r_A+r_B}$, where $r_A+r_B=100$ bits/OFDM symbol and the transmit power
is fixed as $20$dB. } \label{fig:QoS_biased_20db}
\end{centering}
\vspace{-0.3cm}
\end{figure}
\begin{figure}[b]
\begin{centering}
\includegraphics[scale=.8]{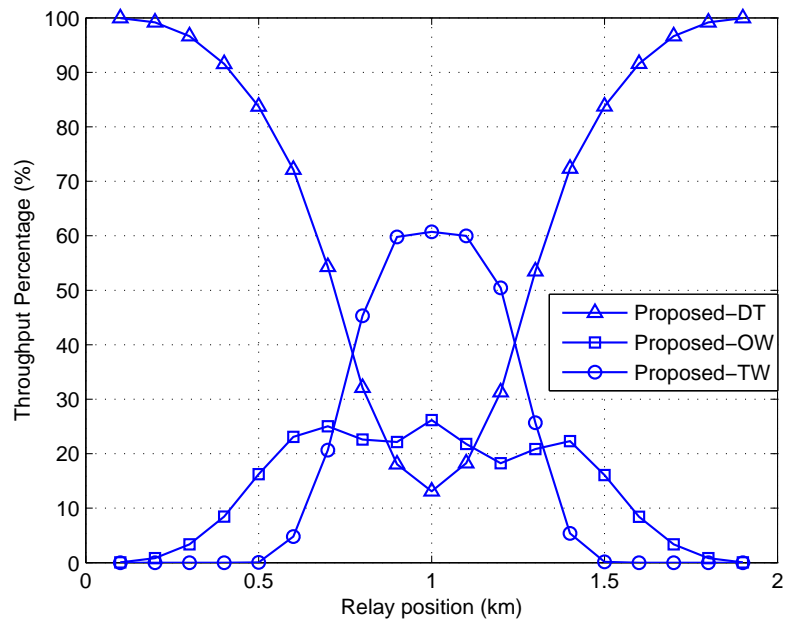}
\vspace{-0.1cm}
 \caption{Throughput percentages of different transmission modes, where the transmit power
is fixed as $20$dB and $r_A=r_B=5$bits/OFDM symbol.}\label{fig:relay_position_20db}
\end{centering}
\vspace{-0.3cm}
\end{figure}

\end{document}